\newcommand{\IP}[1]{\mathrm{IP}[#1,\mathrm{qpoly}]} 
\newcommand{\abs}[1]{\vert #1 \vert}
\newcommand{\gen}[1]{\langle #1 \rangle}
\newcommand{\ket}[1]{\vert #1 \rangle}
\newcommand{\Int}{\mathbb Z}
\newcommand{\mybar}[1]{\lambda}
\newcommand{\Gg}{\mathscr{G}}
\newcommand{\poly}{\mathrm{poly}}
\newcommand{\Aa}{\mathscr{A}}
\newenvironment{proof-sketch}{\trivlist\item[]\emph{Brief proof sketch}:}%
{\unskip\nobreak\hskip 1em plus 1fil\nobreak$\Box$
\parfillskip=0pt%
\endtrivlist}
\title{Interactive Proofs with Polynomial-Time Quantum Prover for Computing the Order of Solvable Groups}
\titlerunning{Interactive Proofs for Order of Solvable Groups}%optional, please use if title is longer than one line
\author{Fran{\c c}ois Le Gall}{Graduate School of Informatics, Kyoto University\\
Yoshida-Honmachi, Sakyo-ku, Kyoto 606-8501, Japan
}{}{}{}
\author{Tomoyuki Morimae}{Yukawa Institute for Theoretical Physics, Kyoto University\\
Kitashirakawa Oiwakecho, Sakyo-ku, Kyoto 606-8502, Japan
%\\{[Address, Country]}
}{}{}{}
\author{Harumichi Nishimura}{Graduate School of Informatics, Nagoya University\\
Chikusa-ku, Nagoya, Aichi 464-8601, Japan 
%\\{[Address, Country]}
}{}{}{}
\author{Yuki Takeuchi}{NTT Communication Science Laboratories, NTT Corporation\\
3-1 Morinosato-Wakamiya, Atsugi, Kanagawa 243-0198, Japan \\\vspace{2mm}
Graduate School of Engineering Science, Osaka University\\
1-3 Machikaneyama-cho, Toyonaka, Osaka 560-8531, Japan
%\\{[Address, Country]}
}{}{}{}
\authorrunning{F. Le Gall, T. Morimae, H. Nishimura and Y. Takeuchi}
\subjclass{\ccsdesc[500]{Theory of computation~Quantum computation theory}}
\keywords{Quantum computing, interactive proofs, group-theoretic problems}%mandatory
\begin{document}

\maketitle

\begin{abstract}
In this paper we consider what can be computed by a user interacting with a potentially malicious server, when the server performs polynomial-time quantum computation but the user can only perform polynomial-time classical (i.e., non-quantum) computation. Understanding the computational power of this model, which corresponds to polynomial-time quantum computation that can be efficiently verified classically, is a well-known open problem in quantum computing. Our result shows that computing the order of a solvable group, which is one of the most general problems for which quantum computing exhibits an exponential speed-up with respect to classical computing, can be realized in this model. 
%We show that group order for solvable groups is in IP[BQP,BPP].
 \end{abstract}

%=====================
\section{Introduction}
%=====================
First-generation quantum computers will be implemented in the ``cloud'' style, since only few groups, such as governments or huge companies, will be able to possess such expensive and high-maintenance machines. In fact, IBM has recently opened their 16-qubit machine for a cloud service \cite{IBM}. In a future when many companies provide their own quantum cloud computing services, a malicious company might emerge who is trying to palm a user off with a wrong result forged from their fake quantum computer. In addition, even if a fortunate user is interacting with a honest server, some noises in the server's gate operations might change the result. How can a user verify the correctness of the server's quantum computation? If the user has his/her own quantum computer, the user can of course check the server's result, but in this case the user may not need the cloud service in the first place. If the solution of the problem is easily verifiable (e.g., integer factoring), the user can naturally verify the correctness of the server's result, but many problems considered in quantum computing are not believed to have this property. Verifying classically and efficiently a server's quantum computation is indeed in general highly nontrivial. %In particular, can a classical user verify it efficiently?
%the correctness of the server's quantum computing?

It is known that if at least two servers, who are entangled but not communicating with each other, are allowed, then any problem solvable in quantum polynomial time can be verified by a classical polynomial-time user who exchanges classical messages with the servers~\cite{Ji,MattMBQC,RUV}. However, the assumption that servers are not communicating with each other is somehow unrealistic: how can the user guarantee that remote servers are not communicating with each other? 

Whether the number of the servers can be reduced to one is a well-known open problem~\cite{AharonovVazirani}. For certain computational problems solvable in quantum polynomial time, it is known that this can be done. 
%number of servers can indeed be reduced to one. 
Simon's problem~\cite{Simon} and factoring~\cite{Shor} are trivial examples, since the answer can be directly checked in classical polynomial time. It is known that recursive Fourier sampling~\cite{BV}, which was the first problem that separates efficient quantum and classical computing, can be verified by a polynomial number of message exchanges with a single quantum server~\cite{MattFourier}. Moreover, it was shown that certain promise problems related to quantum circuits in the second level of the Fourier  hierarchy~\cite{FH} are verifiable by a classical polynomial-time user interacting with a single quantum server who sends only a single message to the user~\cite{Tommaso,FH2}. 
\vspace{2mm}

\noindent
{\bf Our results.}
In this paper we consider the problem of computing the order, i.e., the number of elements, of a finite group given as a black-box group (the concept of black-box groups is defined in Section \ref{sec:prelim}). This problem is central in computational group theory, especially since the ability of computing the order makes possible to decide membership in subgroups. This problem has also been the subject of several investigations in computational complexity \cite{Aaronson+07,Babai92,Babai+STOC09,Babai+FOCS84,WatrousFOCS00,Watrous01}. The seminal result by Babai \cite{Babai92}, especially, which put this problem in the complexity class AM, has been one of the fundamental motivations behind the concept of interactive proofs. Note that this is clearly a hard problem for classical computation: it is easy to show that no polynomial-time classical algorithm exists in the black-box setting, even if the input is an abelian group \cite{Babai+FOCS84}.  

Most of the known quantum algorithms that achieve exponential speedups with respect to the best known classical algorithms are for group-theoretic problems, and especially problems over abelian groups. Shor's algorithm for factoring \cite{Shor}, for instance, actually computes the order of a cyclic black-box group. Watrous has shown that the group order problem can be solved in quantum polynomial time when the input group is solvable \cite{Watrous01}. Since the class of solvable groups, defined in Section~\ref{sec:prelim}, is a large\footnote{It is known (see for instance \cite{Blackburn+07}) that 
\[
\lim_{m\to\infty }\frac{\log \Gg_s(m)}{\log \Gg(m)}=1,
\]
where $\Gg(m)$ denotes the number of finite groups of order at most $m$ and $\Gg_s(m)$ denotes the number of finite solvable groups of order at most $m$. It is even conjectured that the quotient $\Gg_s(m)/\Gg(m)$ goes to~$1$ when $m$ goes to infinity, i.e., most finite groups are solvable.} class of finite groups that includes all abelian groups, this result significantly generalized Shor's algorithm. Watrous' algorithm can actually be seen as one of the most general results achieving an exponential speedup with respect to classical computation.

In this paper we show that the group order problem over solvable groups is also verifiable with a single server. More formally, in Section \ref{sec:prelim}, where we introduce the relevant model of interactive protocols, we will introduce the notation $\IP{k}$ to denote the class of computational problems that are verifiable by a classical polynomial-time user interacting in $k$ messages with a server who works in quantum polynomial time. Our main result is as follows. 
\begin{theorem}\label{th:main}
The solvable group order problem is in the complexity class $\IP{3}$. Moreover, if the set of prime factors of the order is also given as input, then the solvable group order problem is in $\IP{2}$.
\end{theorem}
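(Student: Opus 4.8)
The plan is to have the quantum prover transmit a complete composition series of the input group together with certificates that reduce verifying the order to a list of subgroup non-membership tests, and then to settle those tests interactively. Concretely, the honest prover runs Watrous' algorithm~\cite{Watrous01} to compute a subnormal series $G=G_0\ge G_1\ge\cdots\ge G_m=\triv$ with $G_{i+1}\nor G_i$ and each quotient $G_i/G_{i+1}$ cyclic of prime order $p_i$. It sends the primes $p_i$, generating sets for the $G_i$, straight-line programs witnessing the inclusions that make this a chain generated level by level (each $G_i=\gen{G_{i+1},g_i}$, the input generators lying in $G_0$, and each $g_i\in G$), and words witnessing the relations $g_i^{p_i}\in G_{i+1}$ and $a^{-1}ba\in G_{i+1}$ for all generators $a$ of $G_i$ and $b$ of $G_{i+1}$. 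Since the claimed order is then $N=\prod_i p_i$, this single message also commits the prover to an answer.

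The first half of verification is purely classical. Querying the black-box oracle a polynomial number of times, the verifier checks every transmitted relation and inclusion—certifying $G_{i+1}\nor G_i$, that $G_i=\gen{G_{i+1},g_i}$, and that $g_i^{p_i}\in G_{i+1}$—and runs a deterministic primality test on each $p_i$. These checks force the data to describe a genuine subnormal series from $G$ to $\triv$ in which each quotient is cyclic of order dividing $p_i$, so that $[G_i:G_{i+1}]$ divides $p_i$ and, by Lagrange, $\abs{G}=\prod_i[G_i:G_{i+1}]$ divides $N$, with equality if and only if $g_i\notin G_{i+1}$ for every $i$. The whole task thus reduces to certifying the $m$ non-membership statements $g_i\notin G_{i+1}$, and this is the step I expect to be the main obstacle: a prover overstating the order must have collapsed some factor, i.e.\ have $g_i\in G_{i+1}$, and non-membership is exactly the direction admitting no obvious short classical certificate.

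To discharge these tests I would use an interactive protocol in the style of the classical proof system for Graph Non-Isomorphism. For each $i$ the verifier privately flips a bit $\beta\in\{0,1\}$, draws a near-uniform random element $h\in G_{i+1}$ (near-uniform sampling of a black-box group from a generating set is classically feasible), and sends $z=h\,g_i^{\beta}\in G$, asking the prover to recover $\beta$. The point that keeps the honest prover efficient is that its only job is to decide whether $z\in G_{i+1}$, and this lies in quantum polynomial time: $z\in G_{i+1}$ iff $\abs{\gen{G_{i+1},z}}=\abs{G_{i+1}}$, and both orders are computable by Watrous' algorithm since $\gen{G_{i+1},z}\le G$ is solvable. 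If $g_i\notin G_{i+1}$ the cosets $G_{i+1}$ and $g_iG_{i+1}$ are disjoint and the prover answers correctly with certainty; if $g_i\in G_{i+1}$ the two induced distributions of $z$ coincide up to the sampling error, so no prover guesses better than essentially $1/2$. Running all indices in parallel and amplifying by repetition drives the soundness error down, and together with the classical checks this certifies $\abs{G}=N$ exactly.

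Finally, because each challenge $z=h\,g_i^{\beta}$ refers to the series, the prover must commit to the series before the verifier can issue its challenge, yielding the three-message pattern (series; challenges; answers) that places the problem in $\IP{3}$. When the set of prime factors of the order is given as input, I expect a more economical variant: the factorization lets the lower-bound test be arranged so that the verifier can speak first using only the input, merging the prover's commitment with its response and collapsing the interaction to a single challenge--response in $\IP{2}$; pinning down this verifier-first arrangement while keeping the honest prover quantum polynomial-time is the delicate point there. The remaining issues---confirming that near-uniform sampling keeps the two challenge distributions close enough for amplification, and bookkeeping the polynomially many straight-line programs and oracle calls---are routine.
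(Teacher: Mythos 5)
Your argument for the first half of the theorem ($\IP{3}$) is correct and essentially identical to the paper's 3-message protocol: the prover commits to a prime-refined series together with classically checkable word certificates (the input generators expressed over the series, and $g_i^{p_i}$ and the conjugates expressed over the next level down), these checks force $\abs{G}=\prod_i [G_i:G_{i+1}]$ with each index equal to $1$ or $p_i$, and the Graph-Non-Isomorphism-style challenge $z=h\,g_i^{\beta}$ settles each non-membership with soundness error about $1/2$, amplified by parallel repetition, the honest prover needing only Watrous' quantum polynomial-time algorithms. The one difference is cosmetic: the paper also allows the prover to answer a challenge by a decomposition of $h_i$ over $H_{i-1}$ (a classical certificate of a trivial quotient), whereas your honest prover strips trivial quotients before committing, which is harmless for this half.

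The second half ($\IP{2}$) is a genuine gap: you defer exactly the step that carries the content, namely how the verifier can ``speak first using only the input.'' The missing ingredient is the paper's Theorem~\ref{th:classical2}. Knowing the prime set $S=\{p_1,\ldots,p_\ell\}$ and the encoding length $n$, the integer $m=p_1^n\cdots p_\ell^n$ is a multiple of $\abs{G}$; the verifier therefore computes classically an arbitrary polycyclic generating sequence (by the algorithm of Babai et al.~\cite{Babai+JCSS95}) and refines it by replacing each generator $k$ with the chain of powers $k^{\mu(1)},\ldots,k^{\mu(\ell n)}=k$, where $\mu(1)>\cdots>\mu(\ell n)=1$ are divisors of $m$ such that every ratio $\mu(j-1)/\mu(j)$ is a prime in $S$. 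This produces, entirely on the verifier's side, a sequence $(h_1,\ldots,h_t)$ with $\abs{H_i/H_{i-1}}\in\{1,r_i\}$ for known primes $r_i\in S$, so the verifier can issue the coset challenges in its first (and only) message. Note what this construction deliberately gives up: since classical black-box computation cannot decide membership, the verifier's sequence inevitably contains trivial quotients and the verifier cannot tell which ones, and determining this is precisely what the interaction must accomplish. Consequently the prover's single reply must, for each $i$, either exhibit a decomposition $h_i=h_1^{a_{i,1}}\cdots h_{i-1}^{a_{i,i-1}}$ (a classically checkable certificate that $\abs{H_i/H_{i-1}}=1$) or recover the secret bit $s_i$ (evidence that $h_i\notin H_{i-1}$, hence $\abs{H_i/H_{i-1}}=r_i$). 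Your bare-bit reply format supports only the second alternative, so even completeness would fail once the series is built by the verifier rather than by the quantum prover. With the classical refinement plus this two-alternative response one gets the paper's 2-message protocol; its 3-message protocol is then obtained by delegating the refinement to the prover---the reverse of your derivation.
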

This result shows that for this important computational problem, the number of servers can be reduced to one as well, using a small number of messages. 
%The number of messages in our protocol is moreover a small constant.
Note that assuming, in the second part of Theorem \ref{th:main}, that the set of prime factors of the order is known corresponds to several practical situations. An important example is computing the order of $p$-groups\footnote{A (finite) $p$-group, where $p$ is a prime, is a group of order $p^r$ for some integer $r\ge 1$. A basic result from group theory shows that any $p$-group is solvable.} with~$p$ known, which cannot be done in polynomial time in the classical setting \cite{Babai+FOCS84}. 
The main open question is whether the number of messages can also be reduced to 2 without any assumption on the prime factors. 
%(Note that it is unlieky)

\vspace{2mm}

\noindent
{\bf Other related works.}
In addition to the introduction of multiple servers mentioned above, there are other approaches considered in the literature for constructing verification systems for quantum computation.

First, if the user is allowed to be ``slightly quantum'', any problem solvable in quantum polynomial time can be efficiently verified with a single quantum server. For example, Refs.~\cite{Aharonov,FK} assume that the user can generate randomly-rotated single-qubit states, and Refs.~\cite{posthoc,HM,MNS} assume that the user can measure single-qubit states. %Although generating or measuring quantum states is possible in today's laboratories, it would be desirable if the user if completely free from any quantum technology.

Second, since the class BQP (the class of decision problems that can be solved in quantum polynomial-time) is trivially in PSPACE and ${\rm PSPACE}={\rm IP}$ \cite{Lund+92,Shamir92}, any problem in BQP can be classically verified using generic interactive proof protocols for PSPACE. In such protocols, however, the server has unbounded computational power. A tempting approach is to try to specialize these generic protocols to the class BQP, with the hope that the server's necessary computational power may be reduced. Ref.~\cite{AharonovGreen} made an significant first step in this direction.

Finally, it has been shown very recently that assuming that the learning with errors problem is intractable for polynomial-time quantum computation, any problem solvable in quantum polynomial time can be efficiently verified with a single quantum server and a single classical user~\cite{Mahadev}.
%=====================
\section{Preliminaries}\label{sec:prelim}
%=====================
In this paper we assume that the reader is familiar with the standard notions of group theory (we refer to, e.g., \cite{Isaacs08} for a good introduction). All the groups considered will be finite. Given a group $G$, we use $\abs{G}$ to denote its order (i.e., the number of elements in~$G$), and use $e$ to denote its identity element. %We use $H\le G$ to express that $H$ is a subgroup of~$G$. 
Given elements $g_1,\ldots, g_r\in G$, we denote $\gen{g_1,\ldots, g_r}$ the subgroup of $G$ generated by $g_1,\ldots, g_r$.\vspace{2mm}

\noindent
{\bf Black-box groups.}
We now describe the model of black-box groups. This concept, in which each group element is represented by a string and each group operation is implemented using an oracle, was first introduced by Babai and Szemer{\'e}di \cite{Babai+FOCS84} to describe group-theoretic algorithms in the most general way, without having to concretely specify how the elements are represented and how groups operations are implemented. Indeed, any efficient algorithm in the black-box group model gives rise to an efficient concrete algorithm whenever the oracle operations can be replaced by efficient procedures. Especially, performing group operations can be done directly on the elements in polynomial time for many natural groups, including permutation groups and matrix groups where the group elements are represented by permutations and matrices, respectively. In the quantum setting, black-box groups have first been considered by Ivanyos et al.~\cite{Ivanyos+03} and Watrous~\cite{WatrousFOCS00,Watrous01}. 

A black-box group is a representation of a group $G$ where each element of $G$ is uniquely encoded by a binary string of a fixed length $n$, which is called the encoding length. The encoding length $n$ is known. In order to be able to express the complexity of black-box group algorithms in terms of the group order $|G|$, and not in terms of the encoding length, we make the standard assumption that $n=O(\log |G|)$. Oracles are available to perform group operations. More precisely, two oracles are available. A first oracle performs the group product: given two strings representing two group elements $g$ and $h$, the oracle outputs the string representing $gh$. The second oracle performs inversion: given a string representing an element $g\in G$, the oracle outputs the string representing the element $g^{-1}$. Note that the two oracles may behave arbitrarily on strings not corresponding to elements in $G$; this is not a problem since our protocols will never use the oracles on such strings. We say that a group~$G$ is input as a black-box if a set of strings representing generators $\{g_1,\ldots,g_s\}$ of~$G$ with $s=O(\log\abs{G})$ is given as input and queries to the oracles can be done at cost 1.\footnote{The assumption $s=O(\log\abs{G})$ is standard. Indeed, every group~$G$ has a generating set of size $O(\log \abs{G})$. Additionally, a set of generators of any size can be converted efficiently into a set of generators of size $O(\log\abs{G})$ by taking random products of elements~\cite{BabaiSTOC91}.} The input length is thus $sn=\poly(\log |G|)$.
%, which is polynomial in $\log |G|$ from the assumptions on $s$ and $n$.
 
To be able to take advantage of the power of quantum computation when dealing with black-box groups, the oracles performing the group operations have to be able to deal with quantum superpositions. Concretely, this is done as follows (see \cite{Ivanyos+03,WatrousFOCS00,Watrous01}). Let $s\colon G\to \{0,1\}^n$ denote the encoding of elements as binary strings. We assume that a quantum oracle $V_G$ is available, such that $V_G(\ket{s(g)}\ket{s(h)})=\ket{s(g)}\ket{s(gh)}$ for any two elements $g,h\in G$, and behaving in an arbitrary way on other inputs (i.e., strings not in $s(G)$). Another quantum oracle $V'_G$ is also available, such that $V'_G(\ket{s(g)}\ket{s(h)})=\ket{s(g)}\ket{s(g^{-1}h)}$ for any $g, h\in G$ and again behaving in an arbitrary way on other inputs.\vspace{2mm}

\noindent
{\bf Approximate sampling in black-box groups.}
Babai  \cite{BabaiSTOC91} proved the following result for general groups, which shows that elements of a black-box group can be efficiently sampled nearly uniformly.
\begin{theorem}{(\cite{BabaiSTOC91})}\label{th_babai}
Let $G$ be a black-box group. For any $\varepsilon>0$,
there exists a classical randomized algorithm running
in time polynomial in $\log(|G|)$ and $\log(1/\varepsilon)$ that outputs an element of $G$ such that each $g\in G$ is output with probability in range $(1/|G|-\varepsilon,1/|G|+\varepsilon)$.
\end{theorem}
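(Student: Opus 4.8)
\begin{proof-sketch}
The plan is to build, from the given generators $\{g_1,\dots,g_s\}$, a long sequence of group elements and to output a \emph{random subproduct} of it, where a random subproduct of a sequence $(w_1,\dots,w_k)$ is the element $w_1^{\epsilon_1}\cdots w_k^{\epsilon_k}$ with the $\epsilon_i\in\{0,1\}$ chosen as independent fair coins. Each such subproduct is computable with $k$ oracle calls, so the whole algorithm runs in time polynomial in the sequence length, and the task reduces to exhibiting a sequence of length $k=\poly(\log\abs{G},\log(1/\varepsilon))$ for which the output distribution is $\varepsilon$-close to uniform in the pointwise sense demanded by the statement.

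First I would establish the \textbf{escape lemma}: if a sequence generates $G$ and $H$ is any proper subgroup, then a random subproduct lies outside $H$ with probability at least $1/2$. This follows by letting $j$ be the largest index with $w_j\notin H$, noting that the suffix $w_{j+1}^{\epsilon_{j+1}}\cdots w_k^{\epsilon_k}$ then lies in $H$, and observing that for each fixing of the other coins at most one of the two values of $\epsilon_j$ can place the product in $H$. Since every strictly increasing chain of subgroups of $G$ has length at most $\log_2\abs{G}$, applying the escape lemma to fresh random subproducts of $g_1,\dots,g_s$ and bounding the random walk up this chain by a Chernoff argument shows that $m=O(\log\abs{G}+\log(1/\delta))$ such subproducts generate $G$ with probability at least $1-\delta$. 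This first stage uses only the escape property, not uniformity.

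Second I would prove near-uniformity by an Erd\H{o}s--R\'enyi-style contraction. Writing $\mu_j$ for the distribution of the length-$j$ random subproduct and $u$ for the uniform distribution, I would track the potential $\E\big[\,\sum_{g\in G}(\mu_j(g)-1/\abs{G})^2\,\big]$ and show, via convolution and Plancherel on $G$, that appending one sufficiently well-spread element contracts it by a constant factor up to a small additive term. Iterating gives geometric decay, so a cube of length $k=O(\log\abs{G}+\log(1/\varepsilon))$ forces the expected squared $\ell_2$-distance below $\varepsilon^2$; since the algorithm rerandomizes the entire sequence on each call, its true output distribution is the average of $\mu_k$ over the multipliers, which by Jensen inherits the same $\ell_2$ bound, and pointwise distance is dominated by $\ell_2$-distance, giving exactly $\abs{\mu_k(g)-1/\abs{G}}<\varepsilon$ for every $g$. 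Each subproduct costs $O(k)$ oracle calls, keeping the total running time polynomial in $\log\abs{G}$ and $\log(1/\varepsilon)$.

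The main obstacle is the \textbf{circularity} in the second stage: the clean contraction estimate wants the appended multipliers to be nearly uniform, which is precisely what we do not yet have. Resolving this is the heart of the argument. I would handle it by decoupling generation from mixing and bootstrapping in stages: start from random subproducts of the generators (which by the first stage already have full support), then repeatedly form random subproducts of the samples produced at the previous stage, arguing that each stage strictly improves uniformity. Making this rigorous amounts to replacing the ``uniform multiplier'' hypothesis by a local-expansion / isoperimetric property of the Cayley graph of $G$ and then carefully controlling how the small additive errors accumulate across the $O(\log(1/\varepsilon))$ levels; this error-accounting, rather than any single inequality, is where I expect the real work to lie.
\end{proof-sketch}
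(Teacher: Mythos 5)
A preliminary remark: the paper itself contains no proof of this statement. Theorem~\ref{th_babai} is quoted verbatim from \cite{BabaiSTOC91} and used as a black box, so your proposal can only be compared with Babai's original argument. That argument's skeleton --- random subproducts, escape from proper subgroups, and a bootstrap whose engine is local expansion of vertex-transitive (Cayley) graphs --- is indeed what you describe, and your first stage is correct and standard: in the escape lemma, writing the subproduct as $A\,w_j^{\epsilon_j}B$ with $B\in H$, membership of both $AB$ and $Aw_jB$ in $H$ would force $w_j\in BHB^{-1}=H$; combining this with a Chernoff bound over a subgroup chain of length at most $\log_2\abs{G}$ shows that $O(\log\abs{G}+\log(1/\delta))$ random subproducts generate $G$ with probability $1-\delta$.

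The gap is the one you flag yourself, but it is more severe than ``error accounting''. The $\ell_2$ contraction you invoke is an identity only for uniform multipliers: for $f=\mu_j-u$ and a multiplier $h$ drawn from a distribution $\nu$,
\[
\E_{h\sim\nu}\Bigl\|\frac{1}{2}\bigl(f+f(\cdot\,h^{-1})\bigr)\Bigr\|_2^2
=\frac{1}{2}\,\bigl\|f\bigr\|_2^2+\frac{1}{2}\,\E_{h\sim\nu}\bigl\langle f,\,f(\cdot\,h^{-1})\bigr\rangle ,
\]
and the cross term vanishes when $\nu$ is uniform, while in general it is controlled only by a spectral bound on convolution by $\nu$ orthogonal to the constants --- which is a uniformity-type hypothesis, i.e., essentially the conclusion you are trying to reach. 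Crucially, the properties your first stage actually delivers (generation, escape with probability $1/2$ from every proper subgroup) imply no such bound. Take $G=\Int_N$ with $N$ prime and the single generator $1$: the generator escapes the unique proper subgroup $\{0\}$ with certainty, yet the length-$k$ cube is the binomial distribution $\mathrm{Bin}(k,1/2)$, whose peak exceeds $1/N+\varepsilon$ for, say, $\varepsilon=1/(4N)$ until $k=\Omega(N^2)$ --- exponential in the running time allowed by the theorem; correspondingly the cross term equals $\bigl(1-O(1/k)\bigr)\|f\|_2^2$ and the contraction stalls. So ``each stage strictly improves uniformity'' is not a bookkeeping lemma: any ``well-spread'' hypothesis strong enough to drive the contraction must itself be propagated inductively, and the induction has no base, since stage-one samples can be exactly as degenerate as this binomial example. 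Babai's contribution in \cite{BabaiSTOC91} is precisely the mechanism that breaks this circularity --- the local expansion lemma for vertex-transitive graphs and the cube-doubling analysis built on it --- and it occupies the bulk of that paper. Your sketch names the right tool but leaves its substance, which is the theorem's central step, unproven.
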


\noindent
{\bf Solvable groups.}
Before discussing solvable groups, let us introduce the following concept of polycyclic generating sequences (see \cite{Holt+05} for details). 
\begin{definition}\label{def:solvable}
\label{def-sg}
Let $G$ be a group. A polycyclic generating sequence of $G$ is a sequence $(h_1,\ldots,h_t)$ of $t$ elements from $G$, for some integer $t$, such that:
\begin{enumerate}
%\item 
%$g_i\in G$ for each $i\in\{1,\ldots,t\}$;
\item
$\gen{h_1,\ldots,h_t}=G$;
\item
for each $1< j\le t$, the subgroup $\gen{h_1,\ldots,h_{j-1}}$ is normal in $\gen{h_1,\ldots,h_j}$.
\end{enumerate} 
\end{definition} 

There are many equivalent definitions of solvable groups in the literature (see, e.g., \cite{Holt+05} for a thorough discussion). In this paper we will use the following characterization: a finite group is solvable if and only if it has a polycyclic generating sequence. This characterization, which was already used by Watrous~\cite{Watrous01}, is the most convenient for our purpose. As discussed in \cite{Watrous01}, for any finite solvable group $G$ given as a black box, a polycyclic generating sequence $(h_1,\ldots,h_t)$ with $t=O(\log\abs{G})$ can be computed classically in polynomial time with high probability using for instance the randomized algorithm by Babai et al.~\cite{Babai+JCSS95}. 

Watrous showed that the order of a solvable black-box group can be computed in polynomial time in the quantum setting. We state this result in the following theorem.

\begin{theorem}{(\cite{Watrous01})}\label{th_watrous-order}
Let $G$ be a solvable group given as a black-box group. There exists a quantum algorithm running in time $\poly(\log\abs{G})$ that outputs $|G|$ with probability at least $1-1/\poly(|G|)$.
\end{theorem}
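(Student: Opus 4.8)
The plan is to reduce the computation of $\abs{G}$ to a sequence of order-finding problems in cyclic quotients, exploiting the polycyclic generating sequence that solvable groups are guaranteed to have. First I would compute, classically in polynomial time (as recalled above), a polycyclic generating sequence $(h_1,\ldots,h_t)$ with $t=O(\log\abs{G})$. Setting $G_0=\triv$ and $G_j=\gen{h_1,\ldots,h_j}$, Definition~\ref{def-sg} yields a subnormal series $\triv = G_0 \nor G_1 \nor \cdots \nor G_t = G$ in which each quotient $G_j/G_{j-1}$ is cyclic, generated by the image of $h_j$. Writing $r_j=[G_j:G_{j-1}]$ for the order of that image, we have the telescoping identity $\abs{G}=\prod_{j=1}^t r_j$. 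Since $r_j$ divides $\abs{G}\le 2^n$, each $r_j$ has $O(\log\abs{G})$ bits, so it suffices to compute every $r_j$ exactly with high probability.

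The quantum heart of the argument is an inductive subroutine that, for each $j$, prepares the uniform superposition
\[
\ket{G_{j-1}}=\frac{1}{\sqrt{\abs{G_{j-1}}}}\sum_{g\in G_{j-1}}\ket{s(g)}
\]
over the subgroup $G_{j-1}$ and uses it to extract $r_j$. The key point is that the ``multiply by $h_j$'' unitary $M_{h_j}\colon\ket{s(g)}\mapsto\ket{s(h_jg)}$, realized with the oracle $V_G$, permutes the coset superpositions $\ket{c_a}:=M_{h_j}^a\ket{G_{j-1}}$ cyclically with period exactly $r_j$, because $h_j^{r_j}\in G_{j-1}$ forces $\ket{c_{r_j}}=\ket{c_0}=\ket{G_{j-1}}$. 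Crucially the period is $r_j$ rather than the (possibly much larger) order of $h_j$ in $G$, and this is exactly what the coherent coset state buys us over a single group element. Phase estimation of $M_{h_j}$ on input $\ket{G_{j-1}}$ therefore samples estimates of $b/r_j$ for uniform $b$, from which $r_j$ is recovered exactly by continued fractions and a few repetitions.

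To close the induction I must also turn $\ket{G_{j-1}}$ into $\ket{G_j}$. Applying the controlled multiplication to an auxiliary uniform index register produces $\frac{1}{\sqrt{r_j}}\sum_{a=0}^{r_j-1}\ket{a}\ket{c_a}$, and the target $\ket{G_j}=\frac{1}{\sqrt{r_j}}\sum_{a=0}^{r_j-1}\ket{c_a}$ is precisely what remains if the index register can be reset to $\ket{0}$. The obstacle is that the index register is maximally entangled with the coset register, and naively projecting it out (e.g.\ by a Fourier measurement) succeeds only with probability $1/r_j$, which is fatal once multiplied over all $t$ levels. Since the map $a\mapsto\ket{c_a}$ is injective, the entanglement can be removed only by computing $a$ from the coset, i.e.\ by solving a discrete-logarithm problem in the cyclic quotient $G_j/G_{j-1}$ --- exactly the type of problem quantum Fourier sampling solves in polynomial time, and which can be carried out coherently using $M_{h_j}$ together with the already-constructed preparation unitary for $\ket{G_{j-1}}$ (to recognize the trivial coset). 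Running this in reverse disentangles the index register and leaves the clean state $\ket{G_j}$.

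I expect the disentangling step, together with the propagation of error through the $t=O(\log\abs{G})$ inductive levels, to be the main difficulty. Each invocation of phase estimation and of the coset discrete-logarithm routine is only approximate and is built on the approximate state $\ket{G_{j-1}}$ from the previous level, so I would track the accumulated trace-distance error and show that, by boosting the precision of each subroutine at only polylogarithmic overhead (the target failure probability enters the running time logarithmically), every $r_j$ is obtained exactly except with probability $1/\poly(\abs{G})$. A union bound over the $O(\log\abs{G})$ levels then gives $\abs{G}=\prod_{j=1}^t r_j$ with probability $1-1/\poly(\abs{G})$, as required.
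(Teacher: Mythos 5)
Your first half reconstructs Watrous's argument faithfully: the polycyclic series, the telescoping identity $\abs{G}=\prod_j r_j$, and the recovery of each $r_j$ by Shor-style order finding (phase estimation of $M_{h_j}$) applied to the coset state $\ket{G_{j-1}}$ are exactly the ingredients of the cited proof (note that the paper itself gives no proof of this theorem; it is quoted from \cite{Watrous01}). The divergence, and the gap, is in the disentangling step. First, your claim that the entanglement ``can be removed only by computing $a$ from the coset'' is false, and it is precisely by avoiding any discrete-logarithm computation that Watrous stays in polynomial time: he measures the index register of $\frac{1}{\sqrt{r_j}}\sum_a\ket{a}\ket{c_a}$ in the Fourier basis, obtaining a \emph{known} uniformly random label $b$ together with the residual state $\frac{1}{\sqrt{r_j}}\sum_a\omega^{ab}\ket{c_a}$; because group multiplication of two such labeled states adds/subtracts their known labels (this uses normality of $G_{j-1}$ in $G_j$), a logarithmic number of labeled states, each manufactured from one stored copy of $\ket{G_{j-1}}$, can be combined under classical control so that with high probability some register reaches label $0$, i.e., becomes $\ket{G_j}$.

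Second---and this is the fatal quantitative problem with your route as you describe it---your level-$j$ preparation is a unitary that calls the level-$(j-1)$ preparation unitary several times: once for the base copy inside $\sum_a\ket{a}\ket{c_a}$, and at least a few more inside the coherent discrete-log routine (fresh quotient-element registers for the Fourier sampling, repetitions for amplification, and their uncomputations). A recursion of depth $t=\Theta(\log\abs{G})$ with fan-out $c\ge 3$ yields circuit size $c^{\Omega(t)}=\abs{G}^{\Omega(1)}$, exponential in the input length $\poly(\log\abs{G})$, so the resulting algorithm is not polynomial-time. Watrous's organization avoids this entirely: he maintains a \emph{pool} of physical copies of $\ket{G_{j-1}}$ (bootstrapped from free copies of $\ket{e}$), and each level depletes the pool only additively; nothing is ever recursively re-synthesized. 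To rescue your coherent-DL route you would need the same structural move---treat subgroup states as catalytic resources that are borrowed and approximately restored by uncomputation, never re-prepared by nested unitary calls---and you would also need to supply the missing machinery for the discrete log itself: quotient-group arithmetic performed on coset states, ``copying'' the unknown input coset into fresh $\ket{G_{j-1}}$ registers (cloning is legitimate here only because distinct coset states are orthogonal and multiplication realizes it), and near-deterministic outputs so that uncomputation leaves no residual entanglement. Merely ``recognizing the trivial coset'' with the preparation unitary, as your sketch suggests, does not give an efficient discrete logarithm.
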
\vspace{2mm}

Let $G$ be a solvable group and $(h_1,\ldots,h_t)$ be a polycyclic generating sequence of $G$. In the following we will write $H_j=\gen{h_1,\ldots,h_j}$ for each $j\in\{1,\ldots,t\}$, and for convenience write $H_0=\{e\}$. Since $H_{j}$ is obtained from $H_{j-1}$ by adding one generator, the factor group $H_{j}/H_{j-1}$ is cyclic. Let us write its order $m_j$. Note that the order of $G$ is thus the product $m_1m_2\cdots m_t$. A fundamental (and easy to show) property of polycyclic generating sequences is the following:
For any $j\in\{1,\ldots, t\}$, any element $h\in H_j$ can be written, in a unique way, as $h=h_1^{a_1}h_{2}^{a_{2}}\cdots h_{j}^{a_{j}}$ with integers $a_i\in\{0,1,\ldots,m_i-1\}$ for $i\in\{1,\ldots,j\}$. We call this sequence $(a_1,\ldots,a_{j})$ the decomposition of $h$ over $H_j$. Watrous \cite{Watrous01} showed that %the orders $m_j$ and 
the decomposition of any element can be computed efficiently in the quantum setting, which immediately leads to an efficient algorithm for membership testing in the subgroups $H_{j}$. We state these two results, separately, in the following theorem.

\begin{theorem}{(\cite{Watrous01})}\label{th_watrous}
Let $G$ be a solvable group given as a black-box group and let $(h_1,\ldots,h_t)$ be a polycyclic generating sequence of $G$ with $t=O(\log\abs{G})$. There exist two quantum algorithms $\Aa_1$ and $\Aa_2$ running in time polynomial in $\log\abs{G}$ as follows.
\begin{itemize}
%\item
%Algorithm $\Aa_1$ outputs integers $\tilde m_1$, $\ldots$, $\tilde m_t$ such that, with probability at least $1-1/\poly(|G|)$, the equality $\tilde m_j=m_j$ holds for all $j\in\{1,\ldots,t\}$. 
\item 
Algorithm $\Aa_1$ receives an integer $j\in\{1,\ldots t\}$ and an element $h\in H_j$, and outputs with probability at least $1-1/\poly(|G|)$ the decomposition of $h$ over $H_{j}$.
\item
Algorithm $\Aa_2$ receives an integer $j\in\{1,\ldots t\}$ and an element $h\in G$, and decides whether $h\in H_j$ or not. The decision is correct with probability at least $1-1/\poly(|G|)$.
 \end{itemize}
\end{theorem}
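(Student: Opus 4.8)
The plan is to reduce both algorithms to a single reversible quantum subroutine that, for each level $j$, computes the coset index of an element, and to build this subroutine inductively up the polycyclic series together with the uniform superpositions $\ket{H_j}=\abs{H_j}^{-1/2}\sum_{g\in H_j}\ket{s(g)}$. The base case is $\ket{H_0}=\ket{s(e)}$. Observe first that $\Aa_1$ and $\Aa_2$ both follow once we can, for each $j$, compute the integer $a_j\in\{0,\ldots,m_j-1\}$ such that $h_j^{-a_j}h\in H_{j-1}$ for a given $h\in H_j$: the decomposition of $h$ over $H_j$ is obtained top-down by computing $a_j$, replacing $h$ by $h_j^{-a_j}h\in H_{j-1}$, and recursing (this is $\Aa_1$); and membership of an arbitrary $h\in G=H_t$ in $H_j$ is decided by computing the full decomposition $(a_1,\ldots,a_t)$ over $H_t$ and checking whether $a_{j+1}=\cdots=a_t=0$, which holds exactly when $h\in H_j$ by uniqueness of the decomposition (this is $\Aa_2$).

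Computing $a_j$ is a discrete-logarithm problem in the cyclic factor group $H_j/H_{j-1}=\gen{h_jH_{j-1}}$: we seek $a_j$ with $hH_{j-1}=(h_jH_{j-1})^{a_j}$. I would solve it with Shor's quantum algorithms for order finding and discrete logarithm~\cite{Shor}, realised through coset states. Assuming the clean state $\ket{H_{j-1}}$ is available, the map $\ket{a}\ket{g}\mapsto\ket{a}\ket{s(h_j^a g)}$ (implementable with $O(\log m_j)$ calls to $V_G$ by repeated squaring) turns $\ket{a}\ket{H_{j-1}}$ into $\ket{a}\,h_j^a\ket{H_{j-1}}$, a uniform superposition over the coset $h_j^aH_{j-1}$; distinct cosets give orthogonal states, so $a\mapsto h_j^a\ket{H_{j-1}}$ has period exactly $m_j$ and Fourier sampling returns $m_j$. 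Applying the same idea to the two-parameter function $(x,y)\mapsto h^x h_j^y\ket{H_{j-1}}$, whose period lattice encodes $a_j$, yields $a_j$ by the standard discrete-logarithm routine. Crucially, this subroutine is a deterministic function of its input coset, hence it can be made reversible and run coherently on superpositions.

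To close the induction I must produce the clean state $\ket{H_j}$ from $\ket{H_{j-1}}$, and here lies the one genuine difficulty: the labelled state $\abs{H_j}^{-1/2}\sum_{g\in H_j}\ket{\mathrm{dec}(g)}\ket{s(g)}$ is trivial to build from the known moduli $m_1,\ldots,m_j$, but stripping off its decomposition labels is exactly the problem we are trying to solve, so a naive disentangling (e.g. measuring a Fourier-transformed index register) only yields a character-twisted coset sum with a random, useless phase. I would instead prepare $m_j^{-1/2}\sum_{a=0}^{m_j-1}\ket{a}\,h_j^a\ket{H_{j-1}}$ and then uncompute the index register by running the reversible level-$j$ coset-index subroutine on the group register: since every basis element of the coset superposition $h_j^a\ket{H_{j-1}}$ shares the same index $a$, the subroutine writes the single value $a$ into an ancilla across the whole branch, which is XORed into the index register to clear it and then uncomputed. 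This collapses the state to $m_j^{-1/2}\sum_a h_j^a\ket{H_{j-1}}=\ket{H_j}$, resolving the circularity level by level: $\ket{H_{j-1}}$ enables the level-$j$ subroutine, which in turn disentangles $\ket{H_j}$.

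The main obstacle, and where the real work lies, is the error analysis rather than the algebra. Each $\ket{H_{j-1}}$ is prepared only to within inverse-polynomial trace distance, the moduli $m_j$ need not be powers of two (so order finding and discrete logarithm incur the usual Shor-type rounding errors and require continued-fraction post-processing), and the coset-index subroutine succeeds only with probability $1-1/\poly(\abs{G})$. I would amplify each subroutine by repetition, then propagate and union-bound these errors across the $t=O(\log\abs{G})$ inductive levels, checking that the perturbations in $\ket{H_{j-1}}$ do not blow up when passed through the controlled multiplications and the coherent uncomputation, so that the final states and decompositions are correct with probability $1-1/\poly(\abs{G})$ and the total running time stays polynomial in $\log\abs{G}$.
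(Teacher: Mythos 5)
First, a framing remark: the paper you are being compared against does not prove Theorem~\ref{th_watrous} at all --- it is imported as a black box from Watrous~\cite{Watrous01} --- so your proposal has to be judged against Watrous's own proof. Your reconstruction follows his skeleton faithfully: inductively build the uniform superpositions $\ket{H_j}$ up the polycyclic series, exploit the orthogonality of the coset states $h_j^a\ket{H_{j-1}}$ to run Shor-type order finding (giving $m_j$) and discrete logarithm (giving coset indices) over the cyclic factor groups, obtain $\Aa_1$ by peeling off exponents top-down, and derive $\Aa_2$ from decompositions (Watrous tests membership more directly, using the fact that $h\ket{H_j}=\ket{H_j}$ when $h\in H_j$ while $h\ket{H_j}$ is orthogonal to $\ket{H_j}$ otherwise; your reduction via uniqueness of the decomposition is equally sound). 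The genuine divergence is in the step you rightly isolate as the crux: producing $\ket{H_j}$ from $\ket{H_{j-1}}$. Watrous does precisely the ``naive'' thing you dismiss as useless: he Fourier-transforms and measures the index register, accepting the character-twisted sum $m_j^{-1/2}\sum_a \omega^{ka}h_j^a\ket{H_{j-1}}$ (with $\omega=e^{2\pi i/m_j}$) with a known, uniformly random twist $k$. His key lemma is that this state can be repaired: the reversible map $\ket{x}\ket{y}\mapsto\ket{x}\ket{x^cy}$, computable with $O(\log c)$ oracle calls, sends a pair of twisted states with twists $(k,l)$ to twists $(k-cl,\,l)$, so any single sample whose twist $l$ is invertible modulo $m_j$ --- present with high probability among polynomially many samples, since twists are uniform --- acts as a reusable catalyst that zeroes the twist of every other sample. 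Everything is then measurements, exact unitaries and classical arithmetic: copies are converted one-for-one and errors accumulate linearly.

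Your alternative --- coherently running the amplified level-$j$ coset-index subroutine and uncomputing it to clear the index register --- can in principle be made to work, and your observation that the computed index is constant on each coset branch (because the subroutine touches its work registers only through coset states, which do not depend on the representative) is exactly the right reason why. But the two places where the difficulty then concentrates are left open. First, copy accounting: each coherent run of the subroutine consumes $\poly(\log\abs{G})$ internal copies of $\ket{H_{j-1}}$, so unless you prove that the uncomputation returns those copies intact for reuse, producing one copy of $\ket{H_j}$ costs $\poly(\log\abs{G})$ copies of $\ket{H_{j-1}}$, and the recursion over $t=O(\log\abs{G})$ levels blows up superpolynomially. Second, error propagation: the residual error of each compute--copy--uncompute cycle must be bounded against the ideal-resource run so that imperfections add linearly rather than compounding through square roots; this is the analysis you explicitly defer, and in your version it is where the proof actually lives. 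Neither issue is fatal, but Watrous's twist-correction lemma is exactly the device that makes both of them disappear, which is what your route gives up in exchange for its conceptual uniformity.
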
\vspace{2mm}

\noindent
{\bf Interactive proofs with efficient quantum prover.}
Interactive proof systems are typically described as protocols for decision problems. In this paper it will be more convenient to consider interactive proofs for computing functions, since we are interesting in computing the order of the input group.\footnote{In order to be completely rigorous, we should actually define this concept for functional problems where  the input is represented using oracles (since we are dealing with black-box groups where the group operation is represented by oracles). We nevertheless omit this purely technical point in the exposition.} The definition we give below is inspired by \cite{Goldwasser+ITCS18}.

Let $f: X\to\{0,1\}^\ast$ be a function, where $X$ is a finite set. We consider protocols between a prover and a verifier, who both receives as input an element $x\in X$ and can exchange classical messages of polynomial length. At the end of the protocol, the verifier outputs either some $y\in\{0,1\}^\ast$ or one special element $\perp$. We say that the function $f$ has a \emph{$k$-message polynomial-time interactive proof} if there exists a $k$-message protocol in which the verifier works in classical polynomial time, such that the following properties hold:
\begin{enumerate}
\item
(completeness) there is a prover $P$ such that the verifier's output $y$ satisfies $y=f(x)$ with probability at least $2/3$ when interacting with $P$;
\item
(soundness) for any prover $P'$, the verifier's output $y$ satisfies $y\in\{f(x),\perp\}$ with probability at least $2/3$ when interacting with $P'$.
\end{enumerate}
The prover $P$ in the completeness condition is called the \emph{honest prover}.

The above definition makes no assumption on the computational powers of the provers. Our main definition is obtained by restricting the computational power of the \emph{honest} prover, i.e., the prover $P$ in the completeness condition. 
\begin{definition}\label{def:IP}
A function $f$ is in the class $\IP{k}$ if it has a $k$-message polynomial-time interactive proof where the honest prover $P$ works in quantum polynomial time.
\end{definition}
The notation $\IP{k}$ comes from its definition as a $k$-message interactive protocol with a prover working in quantum polynomial time (when honest).
% and a verifier working in classical polynomial time. 
We stress that in Definition \ref{def:IP} there is no assumption on the computational power of $P'$ for the soundness. 

%==================================
\section{2-Message Protocol with Known Prime Factors}\label{sec:2m}
%==================================
In this section we assume that the prime factors of the order of the black-box group $G$ are known. We present a 2-message protocol in this case, which proves the second part of Theorem \ref{th:main}.

\subsection{Preliminaries}
We will need the following result in our protocol.
%\begin{theorem}{(folklore)}\label{th:classical1}
%Let $G$ be a solvable group given as a black-box group. 
%There exists a classical algorithm running
%in time polynomial in $\log\abs{G}$ that outputs, with probability at least 3/4, 
%$t=O(\log\abs{G})$ elements $h_1,\ldots,h_t$ of $G$ such that, if we denote $H_i=\gen{h_1,\ldots,h_i}$ for
%$1\le i\le t$, the following holds. 
%\begin{itemize}
%\item[(a)]
%$\{e\}=H_0\lhd H_1\lhd\cdots\lhd H_{t-1}\lhd H_t=G$; and
%\item[(b)]
%$H_{i}/H_{i-1}$ is cyclic, for $1\le i \le t$.
%\end{itemize}
%\end{theorem}

%I believe that the following theorem, which says that the decomposition can be refined so that each cyclic quotient group is either trivial or has prime order, is true (and easy to prove).

\begin{theorem}\label{th:classical2}
Let $G$ be a solvable group given as a black-box group. Let $p_1,\ldots,p_\ell$ denote the prime factors of $|G|$ and assume that the set $S=\{p_1,\ldots,p_\ell\}$ is also given as input.
There exists a classical algorithm running in time polynomial in $\log\abs{G}$ that outputs elements $h_1,\ldots,h_t\in G$, with $t=\poly(\log |G|)$, and $t$ prime numbers $r_1,\ldots,r_t\in S$ such that, with probability at least $1-1/\poly(|G|)$, the following conditions hold:
\begin{itemize}
\item
$(h_1,\ldots,h_t)$ is a polycyclic generating sequence of $G$;
\item
the order of $H_{i}/H_{i-1}$ is either 1 or $r_i$ for each $1\le i \le t$, where we denote $H_i=\gen{h_1,\ldots,h_i}$ for $1\le i\le t$ and $H_0=\{e\}$.
\end{itemize}
\end{theorem}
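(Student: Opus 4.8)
The plan is to start from any polycyclic generating sequence of $G$ and then refine each cyclic factor into steps of prime (or trivial) order, exploiting the knowledge of $S$ to carry out this refinement \emph{without} ever computing the actual orders of the factors.

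First I would invoke the classical randomized algorithm of Babai et al.~\cite{Babai+JCSS95}, quoted in Section~\ref{sec:prelim}, to obtain with probability $1-1/\poly(\abs{G})$ a polycyclic generating sequence $(g_1,\ldots,g_s)$ of $G$ with $s=O(\log\abs{G})$. Writing $G_j=\gen{g_1,\ldots,g_j}$ and $G_0=\triv$, each factor $G_j/G_{j-1}$ is cyclic, say of order $m_j$; since $m_j$ divides $\abs{G}$, every prime dividing $m_j$ lies in $S$. The goal is then to replace each single generator $g_j$ by a short block of powers of $g_j$ whose images refine $G_j/G_{j-1}$ into prime-order steps.

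This is where the main obstacle sits, and where the hypothesis resolves it: computing $m_j$ exactly is an order-finding problem, which we cannot do classically. Instead I would \emph{over-generate}. Set $N=n$ (the encoding length), so that $p^{a}\le\abs{G}\le 2^n$ forces every prime power dividing $m_j$ to have exponent at most $N$; hence $m_j$ divides $M:=\prod_{p\in S}p^{N}$. Fixing an arbitrary order $p_1,\ldots,p_\ell$ of $S$, consider the increasing chain of exponents $1=d_0\mid d_1\mid\cdots\mid d_L=M$ obtained by multiplying by one prime of $S$ at a time, so that $L=\ell N=O(\log^2\abs{G})$ and each ratio $d_i/d_{i-1}$ is a single prime. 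Because $G_j/G_{j-1}=\gen{\bar g_j}$ is cyclic and $\gen{\bar g_j^{\,d}}^{\,p}=\gen{\bar g_j^{\,pd}}$, for this cyclic group the index $[\gen{\bar g_j^{\,d}}:\gen{\bar g_j^{\,pd}}]$ equals $p$ if $p$ divides the order of $\gen{\bar g_j^{\,d}}$ and equals $1$ otherwise. Reversing the chain therefore yields the block
\[
g_j^{d_L},\,g_j^{d_{L-1}},\,\ldots,\,g_j^{d_0}=g_j,
\]
where, modulo $G_{j-1}$, consecutive elements generate an increasing chain of subgroups of $G_j/G_{j-1}$ whose successive quotients all have prime or trivial order; I would record the prime $p$ producing step $i$ as the associated $r_i\in S$. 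As all these elements are powers of $g_j$, each is computed classically from the group oracles with $O(\log M)=O(\log^2\abs{G})$ operations.

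Finally I would concatenate these blocks over $j=1,\ldots,s$ to form $(h_1,\ldots,h_t)$ with $t=O(\log^3\abs{G})$, together with the recorded primes $r_1,\ldots,r_t$. Two verifications remain, both routine. (i) The successive-quotient claim is exactly the cyclic-index computation above, with the crucial slack that a ``wrong'' guessed prime merely yields a trivial factor, which is permitted by the statement. (ii) The concatenation is a genuine polycyclic generating sequence: each inserted subgroup $K$ satisfies $G_{j-1}\subseteq K\subseteq G_j$, and since $G_j/G_{j-1}$ is abelian every subgroup of it is normal, so by the correspondence theorem any two nested inserted subgroups $K\subseteq K'$ satisfy $K\nor K'$; moreover the first element $g_j^{d_L}=g_j^{M}$ of block $j$ lies in $G_{j-1}$ (as $m_j\mid M$), so $\gen{G_{j-1},g_j^{d_L}}=G_{j-1}$ and the blocks glue into a single chain from $\triv$ up to $G_s=G$. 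The only randomized ingredient is the initial call to \cite{Babai+JCSS95}, so the whole procedure succeeds with probability $1-1/\poly(\abs{G})$.
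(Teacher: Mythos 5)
Your proposal is correct and follows essentially the same route as the paper's proof: obtain an initial polycyclic generating sequence via Babai et al.~\cite{Babai+JCSS95}, then refine each generator into a block of decreasing powers whose exponents form a divisor chain of the known multiple $\prod_{p\in S}p^{n}$ of $|G|$ with prime consecutive ratios (the paper's function $\mu$ is exactly your chain $d_L,\ldots,d_0$, up to the inessential difference that your blocks begin with the redundant element $g_j^{M}$ while the paper's begin at exponent $M/p_1$). Your explicit justification of the prime-or-trivial indices and of normality via the correspondence theorem fills in details the paper leaves implicit, but the underlying argument is identical.
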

Before proving Theorem \ref{th:classical2}, let us discuss the main idea of the algorithm in this theorem. The approach is to start with an arbitrary polycyclic generating sequence and refine it by replacing each element by decreasing powers of it. Consider for instance the cyclic group of order~12, for which we have $\ell=2$, $p_1=2$, $p_2=3$ and $|G|=12$. %Note that $h^{|G|}=e$ for any element $h\in G$ (this is true for any group). 
Assume that we start with the polycyclic generating sequence $(k_1)$ consisting of a unique element $k_1$ of order~12. We refine this sequence as $(h_1,h_2,h_3)$ with $h_1=k_1^{|G|/p_1}=k_1^6$, $h_2=k_1^{|G|/p_1^2}=k_1^3$ and $h_3=k_1^{|G|/(p_1^2p_2)}=k_1$. This is a polycyclic generating sequence with $|H_1/H_0|=2$, $|H_2/H_1|=2$ and $|H_3/H_2|=3$. The difficulty is that naturally we do not know the order $|G|$. Remember nevertheless that we know the encoding length $n$ of the black-box group, which is an upper bound on $\log_2|G|$. This means that the quantity $m=p_1^n\times \ldots\times p_\ell^n$ is a multiple of the order $|G|$, and thus we can use the same approach, working with $m$ instead of $|G|$ when refining the original polycyclic generating sequence.

\begin{proof}[Proof of Theorem \ref{th:classical2}]
%Remember that $n$ denotes the encoding length of the black-box group, which is an upper bound on $\log_2|G|$. This means that the quantity $m=p_1^n\times \ldots\times p_\ell^n$ is a multiple of the order $|G|$, and thus $h^m=e$ for any element $k\in G$. The main idea of our algorithm is to start with an polycyclic generating sequence $(k_1,\ldots,k_{t'})$, and refine it by replacing each element by a succession of its powers. For instance if $\ell=2$, $p_1=2$, $p_2=3$, $|G|=12$, $n=4$ and the polycyclic generating sequence is 
Let us consider the function $\lambda\colon \{1,\ldots,\ell\}\times \{1,\ldots,n\}\to \Int $ such that 
\[
\lambda(i,a)=p_i^{n-a}\times p_{i+1}^{n}\times \cdots \times p_{\ell}^{n}
\] 
for any $(i,a)\in\{1,\ldots,\ell\}\times \{1,\ldots,n\}$. 
%Observe that $\lambda(i,j)/\lambda(i,j+1)=n_i$ and $\lambda(i,s)/\lambda(i+1,1)=n_{i+1}$.
Now consider the sequence
\begin{equation}\label{eq:seq}
({\lambda(1,1)},\ldots,{\lambda(1,n)},{\lambda(2,1)},\ldots,{\lambda(2,n)},\ldots,{\lambda(\ell,1)},\ldots,{\lambda(\ell,n)})
\end{equation}
consisting of $\ell n$ integers (the integers in the sequence are strictly decreasing). Define the function $\mu\colon\{1,\ldots,\ell n\}\to \Int$ such that $\mu (j)$ is the $j$-th integer in Sequence (\ref{eq:seq}). 
Note that $\mu(j-1)/\mu(j)\in S$ for any $j\in\{2,\ldots,\ell n\}$. 

We now describe our algorithm that computes the claimed generating sequence. 

We first compute a polycyclic generating sequence $(k_1,\ldots,k_{t'})$ of $G$ with $t'=O(\log|G|)$ using the randomized polynomial-time algorithm from \cite{Babai+JCSS95}, already mentioned in Section \ref{sec:prelim}, which succeeds with probability at least $1-1/\poly(|G|)$. Let us write $K_{i'}=\gen{k_1,\ldots,k_{i'}}$ for each $1\le {i'}\le t'$, and $K_0=\{e\}$.

We now show how to refine the polycyclic generating sequence.  
For each ${i'}\in\{1,\ldots,t'\}$, we replace $k_{i'}$ by the sequence of $\ell n$ elements $(k_{i'}^{\mu(1)},\ldots k_{i'}^{\mu(\ell n)})$, which gives a new sequence 
\begin{equation}\label{eq:seq2}
\left(k_1^{\mu(1)},\ldots, k_1^{\mu(\ell n)},k_2^{\mu(1)},\ldots, k_2^{\mu(\ell n)},\ldots,k_{t'}^{\mu(1)},\ldots, k_{t'}^{\mu(\ell n)}
\right),
\end{equation}
of $\ell n t'$ elements. Sequence (\ref{eq:seq2}) is a polycyclic generating sequence of $G$ since $(k_1,\ldots,k_{t'})$ is a polycyclic generating sequence of $G$ and $\mu(\ell n)=1$. For any ${i'}\in\{1,\ldots,t'\}$, observe that
\begin{equation}\label{eq:seq3}
\left|
\gen{k_1^{\mu(1)},\ldots,k_{i'}^{\mu(j)}}/\gen{k_1^{\mu(1)},\ldots,k_{{i'}}^{\mu(j-1)}}
\right|
\in\{1,\mu(j-1)/\mu(j)\}
\end{equation}
for any $j\in\{2,\ldots \ell n\}$. Similarly for any ${i'}\in\{2,\ldots t'\}$ we have
\begin{equation}\label{eq:seq4}
\left|
\gen{k_1^{\mu(1)},\ldots,k_{i'}^{\mu(1)}}/\gen{k_1^{\mu(1)},\ldots,k_{{i'}-1}^{\mu(\ell n)}}
\right|
\in\{1,p_1\}.
\end{equation}

Let us rename the elements of Sequence (\ref{eq:seq2}) as $h_1,\ldots,h_t$, with $t=\ell n t'$. Note that $t=O(\ell(\log|G|)^2)=O((\log|G|)^3)$. Let us write $H_i=\gen{h_1,\ldots,h_i}$ for $1\le i\le t$ and $K_0=\{e\}$. For each $1\le i \le t$, the order of $H_{i}/H_{i-1}$ is either 1 or $r_i$, where $r_i$ can be determined from Equations~(\ref{eq:seq3}) and (\ref{eq:seq4}). More concretely,  $r_i$ is of the form $\mu(j-1)/\mu(j)$ for some $j$ (which can be immediately computed from $i$) when $H_{i}/H_{i-1}$ corresponds to the case of Equation~(\ref{eq:seq3}), and $r_i=p_1$ when $H_{i}/H_{i-1}$ corresponds to the case of Equation~(\ref{eq:seq4}). Note that in both cases we have $r_i\in S$, from the property $\mu(j-1)/\mu(j)\in S$ mentioned before.
\end{proof}
%Note that, contrary to the quantum algorithm of the first part of Theorem \ref{th_watrous}, the classical algorithm of Theorem~\ref{th:classical2} does not output the values $\abs{H_i/H_{i-1}}$: the only knowledge we obtain is that $\abs{H_i/H_{i-1}}\in\{1,r_i\}$.

\subsection{The protocol}
 Let $S=\{p_1,\ldots, p_\ell\}$ denote the set of prime factors of $|G|$, which is given as an additional input. The protocol is given in Figure \ref{fig:protocol}. The main idea is that the verifier can, using Theorem~\ref{th:classical2}, compute by itself a polycyclic generating sequence $(h_1,\ldots,h_t)$ and prime numbers $r_1,\ldots,r_t$ such that $|H_{i}/H_{i-1}|\in\{1,r_i\}$ for each $1\le i \le t$. This is done at Step~1 of the protocol. Note that $|G|=\prod_{i=1}^t |H_i/H_{i-1}|$. The purpose of Steps 2-5 is to decide whether $|H_i/H_{i-1}|=1$ or $|H_i/H_{i-1}|=r_i$, for each $i\in\{1,\ldots,t\}$, by interacting with the prover. More precisely, the verifier interacts with the prover to test, for each $i$, whether $h_i\in H_{i-1}$ or $h_i\notin H_{i-1}$. This requires testing non-membership in a solvable group with a polynomial-time quantum prover, which is achieved by sending (at Step 3) to the prover the element $h_i^{s_i}x_i$ for a random bit $s_i$ and a random element $x_i$, and asking the prover to find the chosen bit $s_i$.
These tests enable the verifier to decide which of the two cases holds (at Steps~5.1 and 5.2), and then to compute $|G|$ at Step 6, or to detect cheating (at Step 5.3).

\begin{figure}[t!]
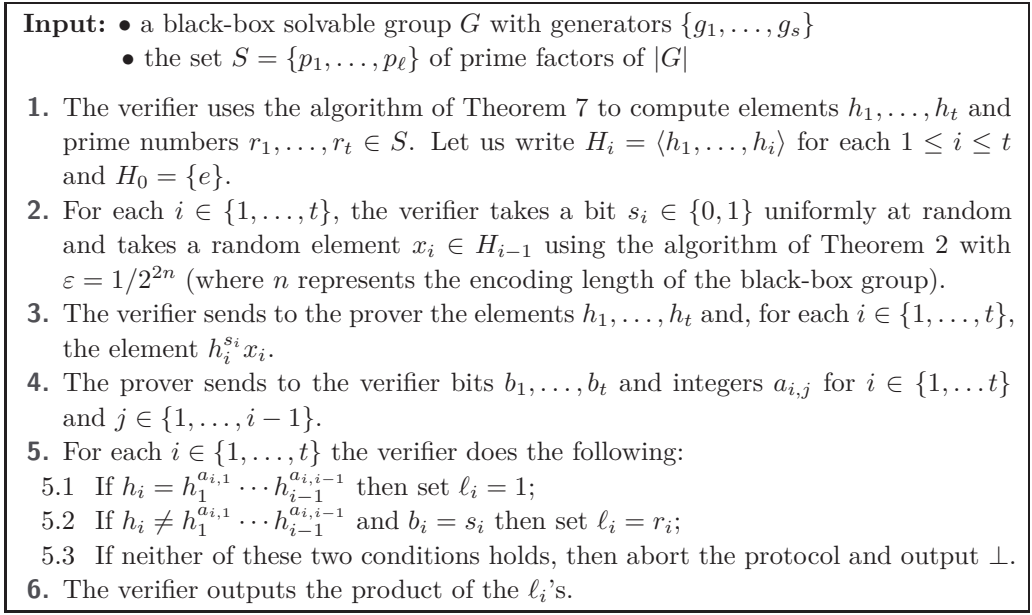

\begin{center}
\fbox{
\begin{minipage}{13 cm} 
{\bf Input:} $\bullet$ a black-box solvable group $G$ with generators $\{g_1,\ldots,g_s\}$\\
\phantom{Input:aa}$\bullet$ the set $S=\{p_1,\ldots, p_\ell\}$ of prime factors of $|G|$\vspace{2mm}
\begin{enumerate}
%\item
%Merlin uses the algorithm of Theorem \ref{th_watrous} to compute $m$, and sends $S$ to Arthur.
\item
The verifier uses the algorithm of Theorem \ref{th:classical2} to compute elements $h_1,\ldots,h_t$ and prime numbers $r_1,\ldots,r_t\in S$. Let us write $H_i=\gen{h_1,\ldots,h_i}$ for each $1\le i\le t$ and $H_0=\{e\}$.
\item
For each $i\in\{1,\ldots,t\}$, the verifier takes a bit $s_i\in\{0,1\}$ uniformly at random and takes a random element $x_i\in H_{i-1}$ using the algorithm of Theorem \ref{th_babai} with $\varepsilon=1/2^{2n}$ (where $n$ represents the encoding length of the black-box group).
\item
The verifier sends to the prover the elements $h_1,\ldots,h_t$ and, for each $i\in\{1,\ldots,t\}$, the element $h_i^{s_i}x_i$.
\item
The prover sends to the verifier bits $b_1,\ldots,b_t$ and integers $a_{i,j}$ for $i\in\{1,\ldots t\}$ and $j\in\{1,\ldots,i-1\}$.
\item
For each $i\in\{1,\ldots,t\}$ the verifier does the following:
\begin{itemize}
\item[5.1]
If $h_i=h_1^{a_{i,1}}\cdots h_{i-1}^{a_{i,i-1}}$ then set $\ell_i=1$;
\item[5.2]
If $h_i\neq h_1^{a_{i,1}}\cdots h_{i-1}^{a_{i,i-1}}$ and $b_i=s_i$ then set $\ell_i=r_i$;
\item[5.3] 
If neither of these two conditions holds, then abort the protocol and output $\perp$.
\end{itemize}
\item
The verifier outputs the product of the $\ell_i$'s.
\end{enumerate}
\end{minipage}
}
\end{center}\vspace{-4mm}
\caption{Our 2-message protocol computing the order of a solvable group when the prime factors of the order are known.}\label{fig:protocol}
\end{figure}

\subsection{Analysis of the protocol}
We now analyze the protocol of Figure \ref{fig:protocol}.  Let $h_1,\ldots,h_t$ be the group elements and $r_1,\ldots,r_t\in S$ be the prime numbers computed at Step 1. The analysis below is done under the assumption that $(h_1,\ldots,h_t)$ is a polycyclic generating sequence of $G$ and $|H_i/H_{i-1}|\in\{1,r_i\}$ for all $i\in\{1,\ldots,t\}$, which is true with probability at least $1-1/\poly(|G|)$ from Theorem \ref{th:classical2}. 

Let us first consider the correctness, i.e., showing that there exists a prover (working in quantum polynomial time) who makes the verifier able to compute $|G|$ with high probability.  This prover acts as follows. For each $i\in\{1,\ldots,t\}$, the prover checks if the element $h_i^{s_i}x_i$ received at Step 3 is in the subgroup $H_{i-1}$, using Algorithm $\Aa_2$ of Theorem~\ref{th_watrous}. If the prover learns that this element is in $H_{i-1}$ then the prover applies Algorithm $\Aa_1$ of Theorem \ref{th_watrous} to obtain a decomposition $(a_{i,1},\ldots,a_{i,i-1})$ of $h_i$ over $H_{i-1}$, and sends to the verifier the bit $b_i=0$ and these values $a_{i,1},\ldots,a_{i,i-1}$. If the prover learns that this element is not in $H_{i-1}$, then the prover sends to the verifier the bit $b_i=1$ and arbitrary values $a_{i,1},\ldots,a_{i,i-1}$.

Let us analyze the verifier's output when interacting with the above prover. If $|H_i/H_{i-1}|=1$ then we have $h_i\in H_{i-1}$ and thus $h_i^{s_i}x_i\in H_{i-1}$ whatever the value of $s_i$ is. With probability at least $1-1/\poly(|G|)$, the prover's message is thus $b_i=0$ and $a_{i,1},\ldots,a_{i,i-1}$ corresponding to the decomposition of $h_i$ over $H_{i-1}$, and then the verifier sets $\ell_i=1$. If $|H_i/H_{i-1}|=r_i$ then we have $h_i\notin H_{i-1}$ and thus $h_i^{s_i}x_i\in H_{i-1}$ if and only if $s_i=0$. With probability at least $1-1/\poly(|G|)$, the bit $b_i$ sent by the prover satisfies $b_i=s_i$, and thus the verifier sets $\ell_i=r_i$ (since the second part of the message $a_{i,1},\ldots,a_{i,i-1}$ cannot correspond to the decomposition of $h_i$ over $H_{i-1}$). In conclusion, with probability at least $1-1/\poly(|G|)$ the output at Step 6 is
\[
\prod_{i=1}^t \ell_i=\prod_{i=1}^t |H_i/H_{i-1}| =|G|.
\] 

Let us now consider the soundness, i.e., showing that for any prover the verifier outputs either $|G|$ or $\perp$ with high probability. It is clear that if $|H_i/H_{i-1}|=r_i$, then the prover cannot convince the verifier to set $\ell_i=1$, since there is no set of integers $a_{i,1},\ldots,a_{i,i-1}$ such that $h_i=h_1^{a_{i,1}}\cdots h_{i-1}^{a_{i,i-1}}$. On the other hand, if $|H_i/H_{i-1}|=1$ then the prover cannot convince the verifier to set $\ell_i=r_i$ unless the prover is able to decide whether $s_i=0$ or $s_i=1$ from the element $h_i^{s_i}x_i$ received, which cannot be done 
with probability larger than $\frac{1}{2}+\frac{1}{2}\delta$, where 
\[
\delta=\frac{1}{2}\sum_{h\in H_{i-1}}\left|\Pr_{x_i}[x_i=h]-\Pr_{x_i}[h_ix_i=h]\right|
\] 
represents the variational distance between the two probability distributions $x_i$ and $h_ix_i$ (seen as distributions over $H_{i-1}$).
%\begin{lemma}
%For any $i\in\{1,\ldots,t\}$, let $x_i$ be taken as in Step 2 in the protocol of Figure \ref{fig:protocol}. Then 
%\[
%\delta(p,q)=\frac{1}{2}\sum_{h\in H_{i-1}}\left|\Pr_{x_i}[x_i=h]-\Pr_{x_i}[h_ix_i=h]\right|
%%\le 1-1/|G|.
%= O\left(\frac{1}{\poly(|G|)}\right).
%\] 
%\end{lemma}
%\begin{proof}
We have
\begin{align*}
\delta&\le 
\frac{1}{2}\sum_{h\in H_{i-1}}\left|\Pr[x_i=h]-\frac{1}{|H_{i-1}|}\right|+
\frac{1}{2}\sum_{h\in H_{i-1}}\left|\Pr[h_ix_i=h]-\frac{1}{|H_{i-1}|}\right|\\
&=
\frac{1}{2}\sum_{h\in H_{i-1}}\left|\Pr[x_i=h]-\frac{1}{|H_{i-1}|}\right|+
\frac{1}{2}\sum_{h\in H_{i-1}}\left|\Pr[x_i=h_i^{-1}h]-\frac{1}{|H_{i-1}|}\right|\\
&=
\sum_{h\in H_{i-1}}\left|\Pr[x_i=h]-\frac{1}{|H_{i-1}|}\right|\\
&\le |H_{i-1}|\varepsilon\\
&\le 1/2^n,
\end{align*}
where the second inequality follows from Theorem \ref{th_babai} and the third inequality follows from our choice of $\varepsilon$ and the upper bound $|G|\le 2^n$.
%\end{proof}
Thus, for any fixed $i$ such that $|H_i/H_{i-1}|=1$, the prover cannot convince the verifier to set $\ell_i=r_i$ with probability greater than $\frac{1}{2}+\frac{1}{2^{n+1}}=1/2+1/\poly(|G|)$. Let us now bound the probability that the verifier's output is either $|G|$ or $\perp$. This corresponds to the probability that the verifier does not output an integer different from the order of~$G$. Note that the verifier can output an integer not equal to the order only if 
%in two cases:
%\begin{itemize}
%\item
%the output of the algorithm of Theorem \ref{th:classical2} is not correct, i.e.,
%we do not have $|H_i/H_{i-1}|\in\{1,r_i\}$ for all $i$; or
%\item
the prover forces the verifier to set $\ell_i\neq |H_i/H_{i-1}|$ for at least one index $i$.
From the above analysis, we know that this can happen with probability at most $1/2+1/\poly(|G|)$, i.e., such a cheating is detected by the verifier at Step 5.3 with probability at least $1/2-1/\poly(|G|)$, in which case the verifier immediately aborts the protocol and outputs $\perp$.  
Thus the overall probability that the verifier's output is either $|G|$ or $\perp$ is at least $1/2-1/\poly(|G|)$. Note finally that this probability can be amplified to reach the soundness threshold of $2/3$ used in Definition \ref{def:IP} by repeating the protocol of Figure \ref{fig:protocol} a constant number of times in parallel and deciding the output based on a standard threshold argument.

%%%%%%%%%%%%%%%%%%%%%%%%%%%%%%%%%%%%%%%%%%%%%%
\section{General 3-Message Protocol}\label{sec:3m}
%%%%%%%%%%%%%%%%%%%%%%%%%%%%%%%%%%%%%%%%%%%%%%
In this section we show that when the prime factors of the order of $G$ are not known, we can design a 3-message protocol, which proves the first part of Theorem \ref{th:main}.

\subsection{The protocol}
Our 3-message protocol, described in Figure \ref{fig:protocol2}, is obtained by modifying the protocol of the previous section. More precisely, Step 1 in the protocol of the previous section is replaced by two steps (Steps 0 and 1 in Figure \ref{fig:protocol2}): instead of having the verifier compute a polycyclic generating sequence $(h_1,\ldots,h_t)$ using Theorem \ref{th:classical2}, which requires the knowledge of the set of factors of $|G|$, in the new protocol the prover computes by itself this sequence and sends it at Step 0 to the verifier, who then checks that the sequence is really correct at Step 1. All the other steps 2-6 are exactly the same as for the protocol in Figure~\ref{fig:protocol} (one small exception is Step 3, which is slightly rewritten since the  polycyclic generating sequence does not need to be sent to the prover anymore). 
 
\begin{figure}[t!]
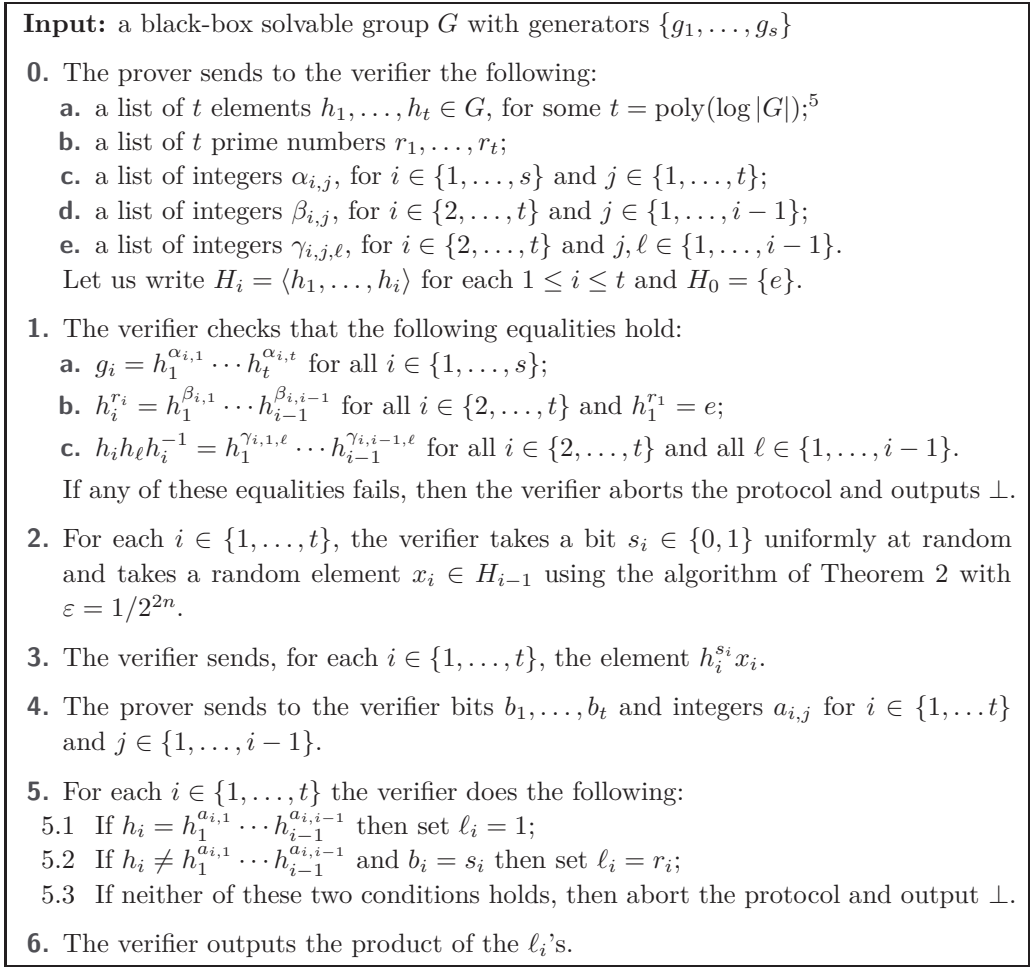

\begin{center}
\fbox{
\begin{minipage}{13 cm} 
{\bf Input:} a black-box solvable group $G$ with generators $\{g_1,\ldots,g_s\}$\vspace{2mm}
\begin{enumerate}
\setcounter{enumi}{-1}
\item
The prover sends to the verifier the following:
\begin{enumerate}
\item
a list of $t$ elements $h_1,\ldots,h_t\in G$, for some $t=\poly(\log|G|)$;\footnotemark
\item
a list of $t$ prime numbers $r_1,\ldots,r_t$;
\item 
a list of integers $\alpha_{i,j}$, for $i\in\{1,\ldots,s\}$ and $j\in \{1,\ldots,t\}$;
\item
a list of integers $\beta_{i,j}$, for $i\in\{2,\ldots,t\}$ and $j\in \{1,\ldots,i-1\}$;
\item
a list of integers $\gamma_{i,j,\ell}$, for $i\in\{2,\ldots,t\}$ and $j,\ell\in \{1,\ldots,i-1\}$.
\end{enumerate}
Let us write $H_i=\gen{h_1,\ldots,h_i}$ for each $1\le i\le t$ and $H_0=\{e\}$.\vspace{2mm}

\item
The verifier checks that the following equalities hold:
\begin{enumerate}
\item 
$g_i=h_1^{\alpha_{i,1}}\cdots h_t^{\alpha_{i,t}}$ for all $i\in\{1,\ldots,s\}$;\vspace{1mm}
\item
$h_i^{r_i}=h_1^{\beta_{i,1}}\cdots h_{i-1}^{\beta_{i,i-1}}$ for all $i\in\{2,\ldots,t\}$ and $h_1^{r_1}=e$;\vspace{1mm}
\item
$h_ih_\ell h_i^{-1}=h_1^{\gamma_{i,1,\ell}}\cdots h_{i-1}^{\gamma_{i,i-1,\ell}}$ for all $i\in\{2,\ldots,t\}$ and all $\ell\in \{1,\ldots,i-1\}$.\vspace{1mm}
\end{enumerate}
If any of these equalities fails, then the verifier aborts the protocol and outputs $\perp$.\vspace{2mm}

\item
For each $i\in\{1,\ldots,t\}$, the verifier takes a bit $s_i\in\{0,1\}$ uniformly at random and takes a random element $x_i\in H_{i-1}$ using the algorithm of Theorem \ref{th_babai} with $\varepsilon=1/2^{2n}$.\vspace{2mm}

\item
The verifier sends, for each $i\in\{1,\ldots,t\}$, the element $h_i^{s_i}x_i$.\vspace{2mm}

\item
The prover sends to the verifier bits $b_1,\ldots,b_t$ and integers $a_{i,j}$ for $i\in\{1,\ldots t\}$ and $j\in\{1,\ldots,i-1\}$.\vspace{2mm}

\item
For each $i\in\{1,\ldots,t\}$ the verifier does the following:
\begin{itemize}
\item[5.1]
If $h_i=h_1^{a_{i,1}}\cdots h_{i-1}^{a_{i,i-1}}$ then set $\ell_i=1$;
\item[5.2]
If $h_i\neq h_1^{a_{i,1}}\cdots h_{i-1}^{a_{i,i-1}}$ and $b_i=s_i$ then set $\ell_i=r_i$;
\item[5.3] 
If neither of these two conditions holds, then abort the protocol and output $\perp$.
\end{itemize}\vspace{2mm}

\item
The verifier outputs the product of the $\ell_i$'s.
\end{enumerate}
%\begin{enumerate}
%\item
%The prover uses the algorithm of Theorem \ref{th_watrous} to $S$ to Arthur.
%\item
%The prover uses the algorithm of Theorem \ref{th:classical2} to compute elements $h_1,\ldots,h_t$ and prime numbers $r_1,\ldots,r_t\in S$. Let us write $H_i=\gen{h_1,\ldots,h_i}$ for each $1\le i\le t$. The prover sends also proofs that $h_i^{r_i}\in H_{i-1}$.
%\item
%The verifier checks the proofs
%\item
%The players executes Steps 2-6 of the protocol of Figure \ref{fig:protocol}.
%
%\end{enumerate}
\end{minipage}
}
\end{center}\vspace{-4mm}
\caption{Our 3-message protocol computing the order of a solvable group.}\label{fig:protocol2}
\end{figure}

\footnotetext{Naturally, this is binary strings corresponding to the elements $h_1,\ldots,h_t$ (i.e., the oracle representations of these elements) that are actually sent, not the elements themselves. Note also that, to simplify the exposition, we are assuming that these strings do correspond to elements of $G$. To deal with a cheating prover that may send strings not corresponding to group elements, we can simply ask the prover to send a certificate of membership in $G$ for each string (such a certificate can be computed in quantum polynomial time using the algorithms of Theorem \ref{th_watrous}).}

\subsection{Analysis of the protocol}
%We now analyze the protocol.

Let us consider the correctness. In that case the prover first uses the algorithm of Theorem~\ref{th_watrous-order} to compute the order $|G|$, then factorizes it using Shor's algorithm \cite{Shor} and collects the prime factors in a set $S$.   The prover then uses the algorithm of Theorem \ref{th:classical2} using the set~$S$ as input to obtain group elements $h_1,\ldots,h_t$ and a list of integers $r_1,\ldots,r_t\in S$ such that with probability at least $1-1/\poly(|G|)$ the following two conditions hold:
\begin{enumerate}[(i)]
 \item%[(i)]
 $(h_1,\ldots,h_t)$ is a  polycyclic generating sequence of $G$, with $t=\poly(\log|G|)$,
\item%[(ii)]
the order of $H_{i}/H_{i-1}$ is either 1 or $r_i$ for each $1\le i \le t$,
\end{enumerate}
where as usual we use the notation $H_i=\gen{h_1,\ldots, h_i}$ for any $i\in\{1,\ldots,t\}$ and the convention $H_0=\{e\}$.
These two conditions are equivalent to the following:
\begin{enumerate}[(a)]
\item
$H_t=G$, i.e., $g_i\in H_t$ for each $i\in\{1,\ldots,s\}$;
\item
$h_i^{r_i}\in H_{i-1}$ for each $i\in\{1,\ldots, t\}$;
\item
$H_{i-1}$ is normal in $H_i$ for any $i\in\{2,\ldots,t\}$, i.e.,
$h_ih_\ell h_i^{-1}\in H_{i-1}$ for any $\ell\in\{1,\ldots,i-1\}$.
\end{enumerate}
Thus, with probability at least $1-1/\poly(|G|)$, the prover can compute the following decompositions in quantum polynomial time using Algorithm $\Aa_1$ of Theorem \ref{th_watrous}:
\begin{itemize}
\item
a decomposition $(\alpha_{i,1}, \ldots,\alpha_{i,t})$ of $g_i$ over $H_t$, for each $i\in\{1,\ldots,s\}$;
\item
a decomposition $(\beta_{i,1}, \ldots,\beta_{i,i-1})$ of $h_i^{r_i}$ over $H_{i-1}$, for each $i\in\{2,\ldots,t\}$;
\item
a decomposition $(\gamma_{i,1,\ell}, \ldots,\gamma_{i,i-1,\ell})$ of $h_ih_\ell h_i^{-1}$ over $H_{i-1}$, for each $i\in\{2,\ldots,t\}$ and each $\ell\in \{1,\ldots,i-1\}$.
\end{itemize}
At Step 0, the prover sends all these integers, along with the elements $h_1,\ldots,h_t$ and the primes $r_1,\ldots,r_t$. All the tests performed by the verifier at Step~1 then pass. The analysis of the second part of the protocol (Steps 2-6) is then exactly the same as the analysis of the protocol of Section \ref{sec:2m}.

The soundness follows by observing that passing the tests performed by the verifier at Step 1 guarantees that Conditions (a)-(c) of the previous paragraph hold. This guarantees that Conditions (i)-(ii) hold as well, and thus the soundness analysis for the second part of the protocol (Steps 2-6) is exactly the same as the analysis of the protocol of Section \ref{sec:2m}.

\section*{Acknowledgments}
FLG was partially supported by the JSPS KAKENHI grants No.~15H01677, No.~16H01705 and No.~16H05853. TM is supported by JST PRESTO No.~JPMJPR176A, and the Grant-in-Aid for Young Scientists (B) No.~JP17K12637 of JSPS.  HN was partially supported by the JSPS KAKENHI grants No.~26247016, No.~16H01705 and No.~16K00015. YT is supported by the Program for Leading Graduate Schools: Interactive Materials Science Cadet Program.

%\bibliographystyle{acm}
%\bibliography{GroupOrder}

\begin{thebibliography}{10}

\bibitem{Aaronson+07}
Scott Aaronson and Greg Kuperberg.
\newblock Quantum versus classical proofs and advice.
\newblock {\em Theory of Computing}, 3:129--157, 2007.
\newblock \href {http://dx.doi.org/10.4086/toc.2007.v003a007}
  {\path{doi:10.4086/toc.2007.v003a007}}.

\bibitem{Aharonov}
Dorit Aharonov, Michael Ben-Or, Elad Eban, and Urmila Mahadev.
\newblock Interactive proofs for quantum computations.
\newblock {\em arXiv:1704.04487}, 2017.

\bibitem{AharonovGreen}
Dorit Aharonov and Ayal Green.
\newblock A quantum inspired proof of {$P^{\#P}\subseteq IP$}.
\newblock {\em arXiv:1710.09078}, 2017.

\bibitem{AharonovVazirani}
Dorit Aharonov and Umesh Vazirani.
\newblock Is quantum mechanics falsifiable? {A} computational perspective on
  the foundations of quantum mechanics.
\newblock {\em arXiv:1206.3686}, 2012.

\bibitem{BabaiSTOC91}
L{\'{a}}szl{\'{o}} Babai.
\newblock Local expansion of vertex-transitive graphs and random generation in
  finite groups.
\newblock In {\em Proceedings of the 23rd Annual {ACM} Symposium on Theory of
  Computing}, pages 164--174, 1991.
\newblock \href {http://dx.doi.org/10.1145/103418.103440}
  {\path{doi:10.1145/103418.103440}}.

\bibitem{Babai92}
L{\'a}szl{\'o} Babai.
\newblock Bounded round interactive proofs in finite groups.
\newblock {\em SIAM Journal on Discrete Mathematics}, 5(1):88--111, 1992.
\newblock \href {http://dx.doi.org/10.1137/0405008}
  {\path{doi:10.1137/0405008}}.

\bibitem{Babai+STOC09}
L{\'{a}}szl{\'{o}} Babai, Robert Beals, and {\'{A}}kos Seress.
\newblock Polynomial-time theory of matrix groups.
\newblock In {\em Proceedings of the 41st Annual {ACM} Symposium on Theory of
  Computing}, pages 55--64, 2009.
\newblock \href {http://dx.doi.org/10.1145/1536414.1536425}
  {\path{doi:10.1145/1536414.1536425}}.

\bibitem{Babai+JCSS95}
L{\'{a}}szl{\'{o}} Babai, Gene Cooperman, Larry Finkelstein, Eugene~M. Luks,
  and {\'{A}}kos Seress.
\newblock Fast monte carlo algorithms for permutation groups.
\newblock {\em Journal of Computer and System Sciences}, 50(2):296--308, 1995.
\newblock \href {http://dx.doi.org/10.1006/jcss.1995.1024}
  {\path{doi:10.1006/jcss.1995.1024}}.

\bibitem{Babai+FOCS84}
L{\'a}szl{\'o} Babai and Endre Szemer{\'e}di.
\newblock On the complexity of matrix group problems {I}.
\newblock In {\em Proceedings of the 25th Annual Symposium on Foundations of
  Computer Science}, pages 229--240, 1984.
\newblock \href {http://dx.doi.org/10.1109/SFCS.1984.715919}
  {\path{doi:10.1109/SFCS.1984.715919}}.

\bibitem{BV}
Ethan Bernstein and Umesh Vazirani.
\newblock Quantum complexity theory.
\newblock {\em SIAM Journal on Computing}, 26:1411--1473, 1997.
\newblock \href {http://dx.doi.org/10.1137/S0097539796300921}
  {\path{doi:10.1137/S0097539796300921}}.

\bibitem{Blackburn+07}
Simon~R. Blackburn, Peter~M. Neumann, and Geetha Venkataraman.
\newblock {\em Enumeration of Finite Groups}.
\newblock Cambridge University Press, 2017.
\newblock \href {http://dx.doi.org/10.1017/CBO9780511542756}
  {\path{doi:10.1017/CBO9780511542756}}.

\bibitem{Tommaso}
Tommaso~F. Demarie, Yungkai Ouyang, and Joseph~F. Fitzsimons.
\newblock Classical verification of quantum circuits containing few basis
  changes.
\newblock {\em arXiv:1612.04914}, 2016.

\bibitem{posthoc}
Joseph~F. Fitzsimons, Michael Hajdu{\v s}ek, and Tomoyuki Morimae.
\newblock Post hoc verification of quantum computation.
\newblock {\em Physical Review Letters}, 120:040501, 2018.
\newblock \href {http://dx.doi.org/10.1103/PhysRevLett.120.040501}
  {\path{doi:10.1103/PhysRevLett.120.040501}}.

\bibitem{FK}
Joseph~F. Fitzsimons and Elham Kashefi.
\newblock Unconditionally verifiable blind computation.
\newblock {\em Phys. Rev. A}, 96:012303, 2017.
\newblock \href {http://dx.doi.org/10.1103/PhysRevA.96.012303}
  {\path{doi:10.1103/PhysRevA.96.012303}}.

\bibitem{Goldwasser+ITCS18}
Shafi Goldwasser, Ofer Grossman, and Dhiraj Holden.
\newblock Pseudo-deterministic proofs.
\newblock In {\em Proceedings of the 9th Innovations in Theoretical Computer
  Science Conference}, pages 17:1--17:18, 2018.
\newblock \href {http://dx.doi.org/10.4230/LIPIcs.ITCS.2018.17}
  {\path{doi:10.4230/LIPIcs.ITCS.2018.17}}.

\bibitem{HM}
Masahito Hayashi and Tomoyuki Morimae.
\newblock Verifiable measurement-only blind quantum computing with stabilizer
  testing.
\newblock {\em Physical Review Letters}, 115:220502, 2015.
\newblock \href {http://dx.doi.org/10.1103/PhysRevLett.115.220502}
  {\path{doi:10.1103/PhysRevLett.115.220502}}.

\bibitem{Holt+05}
Derek~F. Holt, Bettina Eick, and Eamonn~A. O'Brien.
\newblock {\em Handbook of computational group theory}.
\newblock Chapman \& Hall/CRC, 2005.
\newblock \href {http://dx.doi.org/10.1201/9781420035216}
  {\path{doi:10.1201/9781420035216}}.

\bibitem{Isaacs08}
I.~Martin Isaacs.
\newblock {\em Finite group theory}.
\newblock American Mathematical Society, 2008.

\bibitem{Ivanyos+03}
G{\'{a}}bor Ivanyos, Fr{\'{e}}d{\'{e}}ric Magniez, and Miklos Santha.
\newblock Efficient quantum algorithms for some instances of the non-abelian
  hidden subgroup problem.
\newblock {\em International Journal of Foundations of Computer Science},
  14(5):723--740, 2003.
\newblock \href {http://dx.doi.org/10.1142/S0129054103001996}
  {\path{doi:10.1142/S0129054103001996}}.

\bibitem{Ji}
Zhengfeng Ji.
\newblock Classical verification of quantum proofs.
\newblock In {\em Proceedings of the 48th Annual {ACM} symposium on Theory of
  Computing}, pages 885--898, 2016.
\newblock \href {http://dx.doi.org/10.1145/2897518.2897634}
  {\path{doi:10.1145/2897518.2897634}}.

\bibitem{Lund+92}
Carsten Lund, Lance Fortnow, Howard~J. Karloff, and Noam Nisan.
\newblock Algebraic methods for interactive proof systems.
\newblock {\em Journal of the {ACM}}, 39(4):859--868, 1992.
\newblock \href {http://dx.doi.org/10.1145/146585.146605}
  {\path{doi:10.1145/146585.146605}}.

\bibitem{Mahadev}
Urmila Mahadev.
\newblock Classical verification of quantum computations.
\newblock {\em arXiv:1804.01082}, 2018.

\bibitem{MattFourier}
Mathew McKague.
\newblock Interactive proofs with efficient quantum prover for recursive
  fourier sampling.
\newblock {\em Chicago Journal of Theoretical Computer Science}, 2012(6), 2012.
\newblock \href {http://dx.doi.org/10.4086/cjtcs.2012.006}
  {\path{doi:10.4086/cjtcs.2012.006}}.

\bibitem{MattMBQC}
Mathew McKague.
\newblock Interactive proofs for {BQP} via self-tested graph states.
\newblock {\em Theory of Computing}, 12(3):1--42, 2016.
\newblock \href {http://dx.doi.org/10.4086/toc.2016.v012a003}
  {\path{doi:10.4086/toc.2016.v012a003}}.

\bibitem{MNS}
Tomoyuki Morimae, Daniel Nagaj, and Norbert Schuch.
\newblock Quantum proofs can be verified using only single-qubit measurements.
\newblock {\em Physical Review A}, 93:022326, 2016.
\newblock \href {http://dx.doi.org/10.1103/PhysRevA.93.022326}
  {\path{doi:10.1103/PhysRevA.93.022326}}.

\bibitem{FH2}
Tomoyuki Morimae, Yuki Takeuchi, and Harumichi Nishimura.
\newblock Merlin-{Arthur} with efficient quantum {Merlin} and quantum supremacy
  for the second level of the fourier hierarchy.
\newblock {\em arXiv:1711.10605}, 2017.

\bibitem{RUV}
Ben~W. Reichardt, Falk Unger, and Umesh Vazirani.
\newblock Classical command of quantum systems.
\newblock {\em Nature}, 496:456--460, 2013.
\newblock \href {http://dx.doi.org/10.1038/nature12035}
  {\path{doi:10.1038/nature12035}}.

\bibitem{Shamir92}
Adi Shamir.
\newblock {IP} = {PSPACE}.
\newblock {\em Journal of the {ACM}}, 39(4):869--877, 1992.
\newblock \href {http://dx.doi.org/10.1145/146585.146609}
  {\path{doi:10.1145/146585.146609}}.

\bibitem{FH}
Yaoyun Shi.
\newblock Quantum and classical tradeoffs.
\newblock {\em Theoretical Computer Science}, 344:335--343, 2005.
\newblock \href {http://dx.doi.org/10.1016/j.tcs.2005.03.053}
  {\path{doi:10.1016/j.tcs.2005.03.053}}.

\bibitem{Shor}
Peter~W. Shor.
\newblock Polynomial-time algorithms for prime factorization and discrete
  logarithms on a quantum computer.
\newblock {\em {SIAM} Journal on Computing}, 26(5):1484--1509, 1997.
\newblock \href {http://dx.doi.org/10.1137/S0097539795293172}
  {\path{doi:10.1137/S0097539795293172}}.

\bibitem{Simon}
Daniel~R. Simon.
\newblock On the power of quantum computation.
\newblock {\em {SIAM} Journal on Computing}, 26(5):1474--1483, 1997.
\newblock \href {http://dx.doi.org/10.1137/S0097539796298637}
  {\path{doi:10.1137/S0097539796298637}}.

\bibitem{WatrousFOCS00}
John Watrous.
\newblock Succinct quantum proofs for properties of finite groups.
\newblock In {\em Proceedings of the 41st Annual Symposium on Foundations of
  Computer Science}, pages 537--546, 2000.
\newblock \href {http://dx.doi.org/10.1109/SFCS.2000.892141}
  {\path{doi:10.1109/SFCS.2000.892141}}.

\bibitem{Watrous01}
John Watrous.
\newblock Quantum algorithms for solvable groups.
\newblock In {\em Proceedings of the 33rd Annual {ACM} Symposium on Theory of
  Computing}, pages 60--67, 2001.
\newblock \href {http://dx.doi.org/10.1145/380752.380759}
  {\path{doi:10.1145/380752.380759}}.

\bibitem{IBM}
{https://www-03.ibm.com/press/us/en/pressrelease/52403.wss}.

\end{thebibliography}

\end{document}